\newcommand{\ie}{i.\@e.\@}
\newcommand{\eg}{e.\@g.\@}
\newcommand{\etal}{et al.\@}
\newcommand{\lHopital}{l'H\^{o}pital}
\newcommand{\Pade}{Pad\'{e}}
\newcommand{\Plucker}{Pl\"{u}cker}
\newcommand{\realR}{\mathbb{R}}
\newcommand{\compC}{\mathbb{C}}
\newcommand{\intZ}{\mathbb{Z}}
\newcommand{\Prob}{\mathbb{P}}
\newcommand{\WishartW}{\mathcal{W}}
\newcommand{\WishartV}{\mathcal{V}}
\newcommand{\HilH}{\mathbf{H}}
\newcommand{\vect}[1]{\mathbf{#1}}
\DeclareMathOperator{\Tr}{Tr}
\DeclareMathOperator{\vacuum}{vacuum}
\DeclareMathOperator{\diag}{diag}
\DeclareMathOperator{\sgn}{sgn}
\DeclareMathOperator{\Wis}{Wis}
\newtheorem{prop}{Proposition}
\newtheorem{thm}{Theorem}
\newtheorem*{thm*}{Theorem {\ref{thm:first_KP}'}}
\newtheorem*{thmtwo*}{Theorem {\ref{thm:second_KP}'}}
\newtheorem*{thmthree*}{Theorem {\ref{thm:third_KP}'}}
\newtheorem{defn}{Definition}
\newtheorem*{defn*}{Definition {\ref{defn:second_KP}'}}
\newtheorem{lemma}{Lemma}
\newtheorem{remark}{Remark}
\title{Random Matrices with External Source and KP $\tau$ Functions}
\author{Dong Wang \footnote{Centre de recherches math\'{e}matiques, Universit\'{e} de Montr\'{e}al
C. P. 6128, succ. centre ville, Montr\'{e}al, Qu\'{e}bec, Canada H3C 3J7, wangdong@crm.umontreal.ca}}
\begin{document}

\maketitle

\abstract{In this paper we prove that the partition function in the random matrix model with external source is a KP $\tau$ function.}

\section{Introduction}

Let $A \in \HilH^{n \times n}$ be an $n \times n$ Hermitian matrix, and $d\mu(x) = w(x)dx$ be a measure on $\realR$ with all moments finite. Then we define the partition function
\begin{equation} \label{eq:general_definition_of_partition_function}
Z_n(A) = \int_{M \in \HilH^{n \times n}} e^{\Tr(AM)} d\mu(M),
\end{equation}
where $d\mu(M)$ denote a unitary invariant measure on $\HilH^{n \times n}$ such that if eigenvalues of $M$ are $\lambda_1, \dots, \lambda_n$,
\begin{equation}
d\mu(M) = \prod^n_{i=1} w(\lambda_i) dM.
\end{equation}
Due to the unitary-invariance, we assume $A = \diag(a_1, \dots, a_n)$, $a_i \in \realR$ without loss of generality. We consider $Z_n(A)$ as a function of eigenvalues of $A$, and find a KP $\tau$ function property of it.

$Z_n(A)$ arises in the random matrix model with external source \cite{Brezin-Hikami96}, \cite{Brezin-Hikami98}, \cite{Zinn_Justin97}, \cite{Zinn_Justin98}. Let $A \in \HilH^{n \times n}$ be an $n \times n$ Hermitian matrix, and $V(x)$ be a function defined on $\realR$, such that $e^{-V(x)}$ decays sufficiently fast. We consider the ensemble of $n \times n$ Hermitian matrices with the probability density function
\begin{equation} \label{eq:definition_of_density}
P(M) = \frac{1}{Z^V_n(A;\realR)} e^{-\Tr(V(M)-AM)},
\end{equation}
where the normalization constant $Z^V_n(A;\realR)$, also called the partition function, is defined as
\begin{equation} \label{eq:definition_of_partition_function_in_external_source}
Z^V_n(A;\realR) = \int_{M \in \HilH^{n \times n}} e^{-\Tr(V(M)-AM)} dM.
\end{equation}
The ensemble is called the random matrix model with external source, and we can easily identify $Z^V_n(A;\realR)$ as the $Z_n(A)$ in \eqref{eq:general_definition_of_partition_function} with $w(x) = e^{-V(x)}$. In the study of gap probability \cite{Peche06}, \cite{Bleher-Kuijlaars04, Aptekarev-Bleher-Kuijlaars05, Bleher-Kuijlaars07}, \cite{Adler-van_Moerbeke07}, we need to consider also
\begin{equation} \label{eq:definition_of_restricted_partition_function_in_external_source}
Z^V_n(A;E) = \int_{\substack{M \in \HilH^{n \times n} \\ \textnormal{all eigenvalues of $M \in E$}}} e^{-\Tr(V(M)-AM)} dM,
\end{equation}
such that the gap probability that all eigenvalues of $M$ are in $E$, a subset of $\realR$, is $Z^V_n(A;E)/Z^V_n(A;\realR)$. Similarly we identify $Z^V_n(A;E)$ as the $Z_n(A)$ with $w(x) = e^{-V(x)}\chi_{E}(x)$, where $\chi_E$ is the indicator function.

The random matrix model with external source was introduced in the 1990's \cite{Brezin-Hikami96, Brezin-Hikami98} as a generalization of the standard matrix model, \ie\ the $A=0$ case, which is first proposed by Wigner in the 1950's \cite{Mehta04}. However, a special form of the random matrix model with external source has been studied by statisticians since the 1920's, under the name of Wishart ensemble, one of the most important models in multivariate statistics \cite{Muirhead82}, \cite{Goodman63}.

Consider $N$ independent, identically distributed complex samples $\vect{x}_1, \dots, \vect{x}_N$, all of which are $n \times 1$ column vectors, and we further assume that the sample vectors $\vect{x}_i$'s are Gaussian with mean $0$ and covariance matrix $\Sigma$, which is a fixed $n \times n$ positively defined Hermitian matrix. If we put $\vect{x}_i$'s into an $n \times N$ rectangular matrix $X = (\vect{x}_1 : \dots : \vect{x}_N)$, then the sample covariance matrix $S = \frac{1}{N}XX^{\dagger}$ is an $n \times n$ positively defined Hermitian matrix matrix. If we assume $N \geq n$, then the probability density function of $S$ is
\begin{equation}
P(S) = \frac{1}{Z^{\Wis}_{n,N}(\Sigma)} e^{-N\Tr(\Sigma^{-1}S)}(\det S)^{N-n},
\end{equation}
where the normalization constant $Z^{\Wis}_{n,N}(\Sigma)$, analogous to the partition function in \eqref{eq:definition_of_partition_function_in_external_source}, is
\begin{equation}
Z^{\Wis}_{n,N}(\Sigma) = \int_{\substack{S \in \HilH^{n \times n} \\ \textnormal{$S$ is positively defined}}} e^{-N\Tr(\Sigma^{-1}S)}(\det S)^{N-n} dS,
\end{equation}
which is the $Z_n(-N\Sigma^{-1})$ with $w(x) = x^{N-n}\chi_{[0,\infty)}(x)$. To study the distribution of the eigenvalues of $S$, we also need partition functions like the $Z^V_n(A;E)$ in \eqref{eq:definition_of_restricted_partition_function_in_external_source}. See \eg\ \cite{Baik-Ben_Arous-Peche05}, \cite{El_Karoui06}, \cite{Mo08, Mo08a}.

The main result in this paper is that  the partition function $Z_n(A)$ is a KP $\tau$ function. To make the statement precise, we denote power sums of eigenvalues of $A$
\begin{equation} \label{eq:definition_of_p_i}
p_k = \frac{1}{k}\sum^n_{i=1} a^k_i, \qquad k=1,2,\dots.
\end{equation}
Since $Z_n(A)$ is a symmetric function in $a_1, \dots, a_n$, it can be regarded as a function of $\{ p_i \}_{i \in \intZ^+}$. We should be cautious that $p_i$'s are not independent among one another, so that $Z_n(A)$ cannot be written as a function of $p_i$'s in a unique way. However, we will eliminate the ambiguity in the following way. First, we define explicitly a function $\hat{Z}_n(p_1, p_2, \dots)$ in \eqref{eq:definition_of_hat_Z}, and then identify
\begin{equation} \label{eq:relation_betwen_partition_function_and_Schur_expansion}
Z_n(A) = \left. \hat{Z}_n(p_1, p_2, \dots) \right\rvert_{p_k = \frac{1}{k}\sum^n_{i=1} a^k_i}.
\end{equation}
Thus the rigorous statement is
\begin{thm} \label{thm:main_theorem}
$\hat{Z}_n(p_1, p_2, \dots)$ defined by \eqref{eq:definition_of_hat_Z} is a KP $\tau$ function in variables $p_1, p_2, \dots$.
\end{thm}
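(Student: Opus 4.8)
The plan is to collapse $Z_n(A)$ into a single determinant, expand that determinant into Schur functions whose coefficients are minors of one fixed (source-independent) moment matrix, and then read off the KP $\tau$-function property from Sato's \Plucker-relation characterization. First I would integrate out the angular variables: diagonalizing $M = U\Lambda U^\dagger$ and applying the Weyl integration formula factors $d\mu(M)$ as $\Delta(\lambda)^2\prod_i w(\lambda_i)\,d\lambda\,dU$ with $\Delta(\lambda)=\prod_{i<j}(\lambda_j-\lambda_i)$, and the inner $U(n)$ integral is the Harish-Chandra--Itzykson--Zuber integral, equal to a constant times $\det(e^{a_i\lambda_j})/(\Delta(a)\Delta(\lambda))$. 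Cancelling one Vandermonde, writing $\Delta(\lambda)=\det(\lambda_j^{i-1})$, and invoking the Andr\'{e}ief (Heine--de Bruijn) identity reduces the $n$-fold integral to
\begin{equation*}
Z_n(A) = \frac{C_n}{\Delta(a)}\det\bigl(\phi_j(a_i)\bigr)_{i,j=1}^n, \qquad \phi_j(a)=\int_\realR e^{ax}x^{j-1}w(x)\,dx.
\end{equation*}

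Next I would extract a Schur expansion. Writing $e^{ax}=\sum_{m\ge 0}a^m x^m/m!$ gives $\phi_j(a)=\sum_{m\ge 0}\beta_{j,m}a^m$ with $\beta_{j,m}=\mu_{m+j-1}/m!$ and $\mu_k=\int x^k w(x)\,dx$, so the $\beta_{j,m}$ are pure numbers independent of $A$. The Cauchy--Binet formula expresses $\det(\phi_j(a_i))$ as $\sum_{m_1<\cdots<m_n}\det(a_i^{m_k})\det(\beta_{j,m_k})$, and the substitution $m_k=\lambda_{n+1-k}+k-1$ indexes these tuples by partitions $\lambda$ with at most $n$ parts while turning $\det(a_i^{m_k})/\Delta(a)$ into the bialternant formula for $s_\lambda(a_1,\dots,a_n)$. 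Thus
\begin{equation*}
Z_n(A)=C_n\sum_{\ell(\lambda)\le n}c_\lambda\,s_\lambda(a_1,\dots,a_n), \qquad c_\lambda=\pm\det\bigl(\beta_{j,m_k(\lambda)}\bigr)_{j,k=1}^n,
\end{equation*}
and since $p_k=\frac{1}{k}\sum_i a_i^k$ is exactly the KP time, reading each $s_\lambda$ as a Schur polynomial in $p_1,p_2,\dots$ produces the function $\hat{Z}_n=C_n\sum_\lambda c_\lambda\,s_\lambda(p_1,p_2,\dots)$ of \eqref{eq:definition_of_hat_Z}.

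Finally I would conclude with Sato theory. A series $\sum_\lambda c_\lambda s_\lambda$ is a KP $\tau$ function iff the coefficients $c_\lambda$ are the \Plucker\ coordinates of a point of the Sato Grassmannian, equivalently iff they are the maximal minors of a single semi-infinite frame. The previous step displays $c_\lambda$ as exactly the $n\times n$ minor on rows $m_1(\lambda)<\cdots<m_n(\lambda)$ of the fixed matrix $(\beta_{j,m})_{m\ge 0,\,1\le j\le n}$; augmenting its $n$ columns by the standard tail $e_n,e_{n+1},\dots$ realizes the corresponding Grassmannian point, whose \Plucker\ coordinates are precisely these minors. The \Plucker\ relations then hold automatically and $\hat{Z}_n$ is a KP $\tau$ function. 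Equivalently, one may package the same frame as a Clifford-group element $g$ and recognize $\hat{Z}_n$ as the fermionic expectation $\langle\vacuum|\,e^{H(p)}\,g\,|\vacuum\rangle$, which is a $\tau$ function by construction.

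The step I expect to be the main obstacle is making this last identification rigorous rather than formal: one must verify that the moment frame meets the decay/admissibility conditions defining a genuine point of the Grassmannian (so that $\hat{Z}_n$ is an actual function, not just a formal series), track the sign and indexing in the Cauchy--Binet reparametrization so the coefficients really coincide with the minors of one fixed matrix, and confirm that the passage from the $n$ variables $a_i$ to the independent times $p_k$ is the normalization fixed in \eqref{eq:definition_of_hat_Z}. Once the $c_\lambda$ are pinned down as minors of a source-independent frame, the $\tau$-function property is immediate.
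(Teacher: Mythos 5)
Your proposal is correct in outline, but it takes a genuinely different route from the paper. The paper deliberately avoids the HCIZ formula: it expands the unitary integral in zonal/Schur polynomials via \eqref{eq:the_Schur_polynomial_expanssion} to obtain the coefficients $G_\kappa/(n)_\kappa$ directly, and then \emph{verifies} the \Plucker\ relations for $\WishartV=\sum_\kappa \frac{G_\kappa}{(n)_\kappa}v_\kappa$ by brute force --- a Pochhammer identity $(n)_{\alpha-i}(n)_{\beta+i}=n(n+1)_\alpha(n-1)_\beta$ to pull the normalizations out of the bilinear sum, de Bruijn's formula to turn each $G_\kappa$ into a determinant of moments $G_i$, and a two-equal-columns argument showing each coefficient $c(\alpha,\beta)$ vanishes. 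You instead go through HCIZ plus Andr\'{e}ief to collapse $Z_n(A)$ into the single determinant $\det(\phi_j(a_i))/\Delta(a)$, extract the Schur coefficients by Cauchy--Binet as the maximal minors of the fixed frame $(\beta_{j,m})$, and conclude decomposability by \emph{exhibiting} the form as $u_1\wedge\cdots\wedge u_n\wedge v_{-n}\wedge v_{-n-1}\wedge\cdots$ rather than checking relations; this is essentially the Zinn-Justin/Orlov-style argument that the paper itself notes would also work. Your route is shorter and makes the $\tau$-function structure manifest; the paper's route is self-contained (no HCIZ), produces the explicit normalization $G_\kappa/(n)_\kappa$ that \emph{defines} $\hat{Z}_n$ in \eqref{eq:definition_of_hat_Z}, and its Pochhammer lemma is what lets the restriction $l(\kappa)\le n$ pass through the bilinear identity cleanly. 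Two small points you should nail down: (i) since the theorem is about the specific series \eqref{eq:definition_of_hat_Z}, you need the one-line bridge that your Cauchy--Binet coefficients equal $\frac{1}{C_n}\frac{G_\kappa}{(n)_\kappa}$, which follows from linear independence of $\{s_\kappa(a_1,\dots,a_n):l(\kappa)\le n\}$ (both expansions are supported on such $\kappa$); and (ii) the sign in $c_\lambda=\pm\det(\beta_{j,m_k})$ must be, and is, independent of $\lambda$ (it is the fixed factor $(-1)^{n(n-1)/2}$ from reordering columns), since a $\lambda$-dependent sign would destroy the \Plucker\ structure. The remaining convergence caveats you raise are shared by the paper, which also treats its expansion as a formal series in the $p_i$'s.
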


Since $\hat{Z}_n(p_1, p_2, \dots)$ is a KP $\tau$ function, it satisfies Hirota bilinear equations \cite{Date-Kashiwara-Jimbo-Miwa83}, which are equivalent to a series of PDEs. For example,
\begin{equation}
\left( \frac{\partial^4}{\partial p^4_1} + 3\frac{\partial^2}{\partial p^2_2} - 4\frac{\partial^2}{\partial p_1 \partial p_3}  \right) \log \hat{Z}_n(p_1, p_2, \dots) + 6 \left( \frac{\partial^2}{\partial p^2_1} \log \hat{Z}_n(p_1, p_2, \dots) \right)^2 = 0,
\end{equation}
which corresponds to the first nontrivial Hirota bilinear equation.

In \cite{Zinn_Justin02} Zinn-Justin proved that the HCIZ integral is a Toda $\tau$ function. See also \cite{Zinn_Justin-Zuber03}. This result is closely related to theorem \ref{thm:main_theorem}, since the partition function $Z_n(A)$ is usually evaluated by the HCIZ formula \cite{Brezin-Hikami96}, although we do not use it in this paper, and KP $\tau$ functions are closely related to Toda $\tau$ functions. Theorem \ref{thm:main_theorem} may also be proven by the method in \cite{Zinn_Justin02}.

$Z_n(A)$ was studied as a multi-component KP $\tau$ function by Adler and van Moerbeke \etal\ In \cite{Adler-van_Moerbeke-Vanhaecke09} it is proved that the determinant of a moment matrix for several weights, after adding deformation parameters, is a multi-component KP $\tau$ function. By de Bruijn's formula (lemma \ref{lemma:de_Braijn}), we find that $Z_n(A)$ is a special case of these determinants of moment matrices. Especially, when eigenvalues of $A$ have only two distinct values, $Z_n(A)$ is a $3$-component KP $\tau$ function, and is detailed in \cite{Adler-van_Moerbeke07}. For the nondegenerate case, \ie\ eigenvalues $a_1, \dots, a_n$ of $A$ are distinct, $Z_n(A)$ is a $(1+n)$-component KP $\tau$ function.

The intriguing fact is that our KP $\tau$ function structure is independent of this multi-component KP $\tau$ function structure. In particular, our set of KP flow parameters $\{ p_i \}$ is none of these sets of KP flow parameters in \cite{Adler-van_Moerbeke-Vanhaecke09}.

\begin{remark}
With the help of Virasoro constraints, it is shown in \cite{Adler-van_Moerbeke07} that for quadratic $V$, $Z^V_n(A;E)$ in \eqref{eq:definition_of_restricted_partition_function_in_external_source} satisfies a PDE, which is derived from one Hirota bilinear equation of the multi-component KP $\tau$ function. We can also find Virasoro constraints for $\hat{Z}_n(p_1, p_2, \dots)$ as the KP $\tau$ function and get new PDEs satisfied by $Z^V_n(A;E)$. It will be done in a forthcoming paper.
\end{remark}

In section \ref{Boson-Fermion_correspondence_and_KP_tau_functions} we summarize necessary preliminaries. Then in section \ref{Proof_of_the_main_theorem} we give the definition of $\hat{Z}_n(p_1,p_2,\dots)$ and prove theorem \ref{thm:main_theorem}.

\section{Boson-Fermion correspondence and KP $\tau$ functions} \label{Boson-Fermion_correspondence_and_KP_tau_functions}

The definition of KP $\tau$ functions follows that in \cite{Kac-Raina87}, and all materials on symmetric functions are from \cite{Stanley89}.

KP $\tau$ functions can be defined through representations of the Heisenberg algebra, an infinite dimensional Lie algebra. Over any field $K$ with characteristic $0$, such as $\realR$ or $\compC$, the Heisenberg algebra $H$ denotes the Lie algebra over $K$, generated by $\{ h_k \}_{k \in \intZ}$, satisfying
\begin{equation}
[h_k, h_l] = k \delta_{k, -l}.
\end{equation} 

We can construct a representation of $H$ over the so called Boson Fock space, which is $K[p_1, p_2, \dots]$, the space of polynomials with infinitely many variables. $h_k$'s ($k \leq 0$) act as multiplication operators, and $h_k$'s ($k \geq 1$) act as derivations:
\begin{equation}
h_k \rightarrow 
\begin{cases}
-k p_{-k} & \text{for $k \leq -1$}, \\
1 & \text{for $k = 0$}, \\
\frac{\partial}{\partial p_k} & \text{for $k \geq 1$}.
\end{cases}
\end{equation}
This representation of $H$ is called the Boson representation.

On the other hand, $H$ has another representation over the so called Fermion Fock space. To define the Fermion Fock space, we take an infinite dimensional vector space $V$ with basis $\{ v_i \}_{i \in \intZ}$. The Fermion Fock space $\Lambda$ is composed of semi-infinite forms, spanned by the basis $v_{i_0, i_{-1}, i_{-2}, \dots}$, which are defined as
\begin{equation}
v_{i_0, i_{-1}, i_{-2}, \dots} = v_{i_0} \wedge v_{i_{-1}} \wedge v_{i_{-2}} \wedge \dots,
\end{equation}
with $i_0, i_{-1}, \dots$ strictly decreasing and $i_{-k} = -k$ for $k$ sufficiently large.

We call the semi-infinite form $v_{0, -1, -2, \dots} = v_0 \wedge v_{-1} \wedge v_{-2} \wedge \dots$ the {\em vacuum}, following the physical terminology. Later in this paper, we use {\em form} to mean a semi-infinite form, unless otherwise claimed.

Formally, we define the action of $h_i$ on $V$ by
\begin{equation}
h_k(v_l) = v_{l-k},
\end{equation}
and get the induced action on $\Lambda$ by
\begin{equation}
h_k(v_{i_0, i_{-1}, i_{-2}, \dots}) = \sum_{j \geq 0} v_{i_0} \wedge v_{i_{-1}} \wedge \dots \wedge v_{i_{-j+1}} \wedge v_{i_{-j}-k} \wedge v_{i_{-j-1}} \wedge \dots.
\end{equation}
Although the action of $h_k$ on $V$ is not consistent with the Lie algebra structure of $H$, the action of $h_k$ on $\Lambda$ is a representation of $H$. This is the Fermion representation of $H$.

We can observe that the Boson representation and the Fermion representation of $H$ are equivalent. The correspondence $\Phi$ between $K[p_1, p_2, \dots]$ and $\Lambda$ is
\begin{equation}
\begin{split}
\Phi(1) = & \vacuum, \\
\Phi(p_k/k) = & h_{-k}(\vacuum), \\
\Phi(f(p_1, p_2/2, \dots)) = & f(h_{-1}, h_{-2}, \dots)(\vacuum),
\end{split}
\end{equation}
where $k \geq 0$, and $f$ is a polynomial. Since $[h_k, h_l] = 0$ for $k, l \in \intZ^-$, the polynomial of operators $f(h_{-1}, h_{-2}, \dots)$ is well defined. Although it is not difficult to check the validity of the correspondence $\Phi$, the images of monomials on the Boson Fock space become messy combinations of the basis of $\Lambda$. It is an interesting question what the preimage of $v_{i_0, i_{-1}, i_{-2}, \dots}$ is. The answer is nontrivial, and can be best formulated in notations of symmetric functions.

\begin{prop}
Let $i_0 = \kappa_0$, $i_{-1} = \kappa_1 -1$, $i_{-2} = \kappa_2 -2$, \dots with $\kappa_0 \geq \kappa_1 \geq \kappa_2 \geq \dots$, such that $(\kappa_0, \kappa_1, \dots) = \kappa$ is a partition, then $\Phi^{-1}(v_{i_0, i_{-1}, i_{-2}, \dots}) = \tilde{s}_{\kappa}(p_1, p_2, \dots)$, where $\tilde{s}_{\kappa}$ is a polynomial in the definition of Schur functions by power sums \footnote{Here the term power sum is defined slightly different from the most common definition in symmetric function theory: $p_k = \frac{1}{k}\sum^{\infty}_{i=1} x^k_i$, in the same style as \eqref{eq:definition_of_p_i}.}: If we regard $\{ p_i \}$ as power sums, then the Schur functions $s_{\kappa}$ satisfies
\begin{equation} \label{eq:Schur_polynomial_in_pwer_sums}
s_{\kappa} = \tilde{s}_{\kappa}(p_1, p_2, \dots).
\end{equation}
\end{prop}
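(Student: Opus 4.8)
The plan is to prove the equivalent assertion $\Phi(\tilde s_\kappa) = v_{i_0,i_{-1},\dots}$, abbreviating $v_\kappa := v_{i_0,i_{-1},\dots}$ for the form attached to $\kappa = (\kappa_0,\kappa_1,\dots)$ by $i_{-k} = \kappa_k - k$, and I would argue by induction on $|\kappa| = \kappa_0 + \kappa_1 + \cdots$. The one structural fact I use repeatedly is that $\Phi$ is an \emph{equivalence} of the Boson and Fermion representations of $H$: it intertwines every $h_k$ and preserves the grading, since the degree on $K[p_1,p_2,\dots]$ (with $\deg p_k = k$) is the eigenvalue of the number operator $\sum_{k\geq1} h_{-k}h_k$, which $\Phi$ respects; in particular $\Phi^{-1}(v_\kappa)$ is homogeneous of degree $|\kappa|$. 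The base case $\kappa = \emptyset$ is the defining relation $\Phi(1) = \vacuum$.

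The engine of the induction is the action of the lowering operators $h_m$ with $m \geq 1$, computed on both sides. On the Boson side $h_m = \partial/\partial p_m$, and differentiating a Schur function by a power sum obeys the Murnaghan--Nakayama rule $\partial_{p_m}\tilde s_\kappa = \sum_\mu (-1)^{\mathrm{ht}(\kappa/\mu)}\,\tilde s_\mu$, summed over $\mu \subset \kappa$ with $\kappa/\mu$ a border strip of size $m$ (classical; cf. \cite{Stanley89}). On the Fermion side $h_m$ moves one index down by $m$: reading the strictly decreasing sequence $(i_0,i_{-1},\dots)$ as the Maya diagram of $\kappa$, the surviving summands of $h_m(v_\kappa)$ are exactly those sending an occupied site $i_{-j}$ to the vacant site $i_{-j}-m$; such moves are in bijection with the size-$m$ border strips $\kappa/\mu$, and restoring decreasing order costs one transposition per occupied site lying strictly between, that is, the sign $(-1)^{\mathrm{ht}(\kappa/\mu)}$. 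Hence $h_m(v_\kappa) = \sum_\mu (-1)^{\mathrm{ht}(\kappa/\mu)}\,v_\mu$, with coefficients matching the Boson side term by term.

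With these computations the induction closes at once. Assuming $\Phi(\tilde s_\mu) = v_\mu$ for all $|\mu| < N$ and fixing $|\kappa| = N$, set $g := \tilde s_\kappa - \Phi^{-1}(v_\kappa)$, homogeneous of degree $N$. For each $m \geq 1$, intertwining gives $\partial_{p_m}\Phi^{-1}(v_\kappa) = \Phi^{-1}(h_m v_\kappa) = \sum_\mu (-1)^{\mathrm{ht}(\kappa/\mu)}\Phi^{-1}(v_\mu)$; every $\mu$ occurring has $|\mu| = N - m < N$, so the inductive hypothesis turns the right-hand side into $\sum_\mu (-1)^{\mathrm{ht}(\kappa/\mu)}\tilde s_\mu$, which is $\partial_{p_m}\tilde s_\kappa$ by the Boson computation. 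Thus $\partial_{p_m} g = 0$ for every $m \geq 1$, and a polynomial all of whose first partials vanish is constant; being homogeneous of positive degree, $g = 0$, \ie\ $\Phi(\tilde s_\kappa) = v_\kappa$. The one delicate point --- the main obstacle --- is the exact agreement of the two border-strip expansions including signs: one must check that a single downward fermion hop reproduces Murnaghan--Nakayama on the nose, the heart being the identification of the wedge-reordering sign with the border-strip height via the Maya-diagram dictionary. Note the $m=1$ case is sign-free but controls only $\partial_{p_1}$, which alone cannot force $g$ constant, so the sign bookkeeping for $m \geq 2$ is genuinely needed; everything else is linear algebra in the two Fock spaces plus compatibility of gradings.
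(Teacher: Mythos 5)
Your argument is correct, but note that the paper does not prove this proposition at all: it is quoted as a known ingredient of the Boson--Fermion correspondence, with \cite{Kac-Raina87} and \cite{Stanley89} as the implicit sources. So there is no ``paper proof'' to match; what you have supplied is a self-contained derivation. Your route --- induction on $|\kappa|$, using that $\Phi$ intertwines the annihilation operators $h_m=\partial/\partial p_m$, that $\partial_{p_m}\tilde s_\kappa$ obeys the (dual) Murnaghan--Nakayama rule, and that a single fermion hop $i_{-j}\mapsto i_{-j}-m$ in the Maya diagram reproduces the border-strip sum with sign $(-1)^{\mathrm{ht}(\kappa/\mu)}$ equal to the wedge-reordering sign --- is sound, and the normalization $p_k=\frac1k\sum x_i^k$ is exactly what makes $\partial_{p_k}$ the Hall adjoint of multiplication by the standard power sum, so the two expansions do agree coefficient by coefficient; the closing step ($\partial_{p_m}g=0$ for all $m$ forces the homogeneous $g$ of positive degree to vanish) is fine. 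The textbook proofs tend to go the other way, via the vertex-operator/generating-function computation of $\Phi$ or via the Jacobi--Trudi characterization of Schur functions, so your induction is a genuinely more elementary alternative. Two small things you lean on without proof, both of which the paper itself also takes for granted, are that $\Phi$ as defined is an equivalence of $H$-representations (so that it intertwines $h_m$ for $m\ge 1$, not just the $h_{-k}$ appearing in its definition) and that $\Phi^{-1}(v_\kappa)$ is homogeneous of degree $|\kappa|$; the latter is cleanest to see by writing $v_\kappa$ in the span of $h_{-\lambda_1}\cdots h_{-\lambda_l}(\vacuum)$ with $\lambda\vdash|\kappa|$, whose $\Phi$-preimages are the degree-$|\kappa|$ monomials $\prod_i p_{\lambda_i}/\lambda_i$.
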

Since every symmetric function can be written uniquely as a polynomial of power sums, $\tilde{s}_{\kappa}$ is well defined.

To define KP $\tau$ functions, we need the concept of decomposability of forms. We call $v \in \Lambda$ decomposable, if and only if 
\begin{equation}
v = u_0 \wedge u_{-1} \wedge u_{-2} \wedge \dots,
\end{equation}
where $u_i$'s are linear combinations of $v_i$'s. Now we are ready to give the definition:
\begin{defn} \label{defn:first_KP}
$f(p_1, p_2, \dots) \in K[p_1, p_2, \dots]$ is a KP $\tau$ function if and only if $f = \Phi^{-1}(v)$, where $v$ is a decomposable form. 
\end{defn}

\section{Proof of theorem \ref{thm:main_theorem}} \label{Proof_of_the_main_theorem}

In order to define $\hat{Z}_n(p_1,p_2,\dots)$, we first expand $Z_n(A)$ in Schur polynomials. By the Weyl integral formula, we have
\begin{equation}
Z_n(A) = \frac{1}{C_n} \idotsint \Delta(\lambda)^2 \int_{U(n)} e^{\Tr (AU \diag(\lambda) U^{-1})} dU d\mu(\lambda_1) \dots d\mu(\lambda_n),
\end{equation}
where $C_n$ is a constant, and $dU$ is the Haar measure over $U(n)$. Then we use the identity \cite{Macdonald95}
\begin{equation} \label{eq:the_Schur_polynomial_expanssion}
\int_{U(n)} e^{\Tr (AU \diag(\lambda) U^{-1})} dU = \sum^{\infty}_{k=0} \frac{1}{k!} \sum_{\substack{\kappa \vdash k \\ l(\kappa) \leq n}} \frac{C_{\kappa}(l_1, \dots, l_n) C_{\kappa}(\lambda_1, \dots, \lambda_n)}{C_{\kappa}(1, \dots, 1)},
\end{equation}
where $\kappa$'s are partitions of $k$, $l(\kappa)$ is the length of $\kappa$, and $C_{\kappa}$'s are constant multiples of Schur polynomials $s_{\kappa}$, with the normalization \cite{Dumitriu-Edelman-Shuman07}
\begin{equation}
\sum_{\kappa \vdash k} C_{\kappa}(x_1, \dots, x_n) = (x_1 + \dots + x_n)^k.
\end{equation}
Furthermore, we know \cite{Stanley89}, \cite{Dumitriu-Edelman-Shuman07}:
\begin{equation}
C_{\kappa} = \frac{k!}{H(\kappa)} s_{\kappa} \qquad \textnormal{and} \qquad s_{\kappa}(1, \dots, 1) = \frac{(n)_{\kappa}}{H(\kappa)},
\end{equation}
where $H(\kappa)$ is the hook length product of $\kappa$, and $(n)_{\kappa}$ is the Pochhammer symbol: if $\kappa = (\kappa_1, \kappa_2, \dots, \kappa_l)$, then
\begin{equation}
(n)_{\kappa} = \prod^l_{i=1} \prod^{\kappa_i}_{j=1} (n-i+j).
\end{equation}
Thus if we denote 
\begin{equation}
G_{\kappa} = \idotsint \Delta^2(\lambda) s_{\kappa}(\lambda_1, \dots, \lambda_n) d\mu(\lambda_1) \dots d\mu(\lambda_n),
\end{equation}
we have
\begin{equation} \label{eq:final_formula_of_Wishart}
\begin{split}
Z_n(A)= & \frac{1}{C_n} \idotsint \Delta(\lambda)^2 \sum^{\infty}_{k=0} \sum_{\substack{\kappa \vdash k \\ l(\kappa) \leq n}}  \frac{s_{\kappa}(a_1, \dots, a_n) s_{\kappa}(\lambda_1, \dots, \lambda_n)}{(n)_{\kappa}} d\mu(\lambda_1) \dots d\mu(\lambda_n) \\
= & \frac{1}{C_n} \sum^{\infty}_{k=0} \sum_{\substack{\kappa \vdash k \\ l(\kappa) \leq n}} \frac{1}{(n)_{\kappa}} G_{\kappa} s_{\kappa}(a_1, \dots, a_n),
\end{split}
\end{equation}
and we define
\begin{equation} \label{eq:definition_of_hat_Z}
\hat{Z}_n(p_1,p_2,\dots) =  \frac{1}{C_n} \sum^{\infty}_{k=0} \sum_{\substack{\kappa \vdash k \\ l(\kappa) \leq n}} \frac{1}{(n)_{\kappa}} G_{\kappa} \tilde{s}_{\kappa}(p_1, p_2, \dots),
\end{equation}
where $\tilde{s}_{\kappa}$ is the polynomial defined in \eqref{eq:Schur_polynomial_in_pwer_sums}. Then it is clear that \eqref{eq:relation_betwen_partition_function_and_Schur_expansion} holds.

To determine whether $\hat{Z}_n(p_1,p_2,\dots)$ is a KP function, we turn to the Fermion representation, and map it as 
\begin{equation} \label{eq:Phi_image_of_W}
\Phi(\hat{Z}_n(p_1,p_2,\dots))= \frac{1}{C_n} \WishartV, \qquad \textnormal{with} \qquad \WishartV = \sum^{\infty}_{k=0} \sum_{\substack{\kappa \vdash k \\ l(\kappa) \leq n}} \frac{1}{(n)_{\kappa}} G_{\kappa} v_{\kappa}, 
\end{equation}
where for $\kappa = (\kappa_0, \kappa_1, \dots, \kappa_l)$, we denote $v_{\kappa} = v_{0+\kappa_0, -1+\kappa_1, \dots, -l+\kappa_l, -l-1, \dots}$.

To prove that $\hat{Z}_n(p_1,p_2,\dots)$ is a KP $\tau$ function is equivalent to prove that $\WishartV$ is a decomposable form. We have a simple criterion (\Plucker\ relations) for decomposability of forms, and first introduce two kinds of linear operators $v_i \wedge (v)$ and $\imath_{v^*_i} (v)$ on any $v \in \Lambda$:
\begin{align}
v_i \wedge (v_{i_0} \wedge v_{i_{-1}} \wedge \dots) = & v_i \wedge v_{i_0} \wedge v_{i_{-1}} \wedge \dots, \\
\imath_{v^*_i} (v_{i_0} \wedge v_{i_{-1}} \wedge \dots) = & 
\begin{cases}
(-1)^j v_{i_0} \wedge \dots \wedge v_{i_{-j+1}} \wedge v_{i_{-j-1}} \wedge \dots& \text{if $v_{i_{-j}} = v_i$}, \\
0 & \text{otherwise.} 
\end{cases}
\end{align}
Here we note that $v_i \wedge (v)$ and $\imath_{v^*_i} (v)$ are not in $\Lambda$: they are in $\Lambda^+$ and $\Lambda^-$ respectively. $\Lambda^+$ is spanned by forms $v^+_{i_0, i_{-1}, \dots} = v_{i_0} \wedge v_{i_{-1}} \wedge \dots$ such that $i_0, i_{-1}, \dots$ are strictly decreasing and  $i_{-k} = -k+1$ for $k$ sufficiently large; $\Lambda^-$ is spanned by forms $v^-_{i_0, i_{-1}, \dots} = v_{i_0} \wedge v_{i_{-1}} \wedge \dots$ such that $i_0, i_{-1}, \dots$ are strictly decreasing and $i_{-k} = -k-1$ for $k$ sufficiently large.

Now we can state the criterion \cite{Kac-Raina87}:

\begin{prop}
$v \in \Lambda$ is decomposable, if and only if the tensor of forms in $\Lambda^+ \otimes \Lambda^-$
\begin{equation}
\sum^{\infty}_{i = -\infty} v_i \wedge (v) \otimes \imath_{v^*_i} (v) = 0.
\end{equation}
\end{prop}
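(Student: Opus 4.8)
The plan is to prove the two implications separately: the forward implication (decomposability $\Rightarrow$ the bilinear identity) by a direct computation, and the converse (the identity $\Rightarrow$ decomposability) as the substantial part. Throughout I abbreviate the creation operator $c_i = v_i \wedge (\cdot)$ and the annihilation operator $a_i = \imath_{v^*_i}$, so that the operator in question is $S = \sum_{i \in \intZ} c_i \otimes a_i \colon \Lambda \otimes \Lambda \to \Lambda^+ \otimes \Lambda^-$. The one structural fact I would isolate first is the basis-independence of $S$: for any admissible basis $\{ w_j \}_{j \in \intZ}$ of $V$ (one with $w_{-k} = v_{-k}$ for $k$ large, so that the induced operators still map $\Lambda$ into $\Lambda^\pm$), with dual basis $\{ w^*_j \}$, one has $\sum_i c_i \otimes a_i = \sum_j (w_j \wedge (\cdot)) \otimes \imath_{w^*_j}$, simply because $\sum_i v_i \otimes v^*_i$ is the canonical element of $V \otimes V^*$ and is therefore invariant under change of basis. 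I would also check well-definedness, namely that in $S(v \otimes v)$ each fixed basis vector of $\Lambda^+ \otimes \Lambda^-$ receives contributions from only finitely many $i$; this is routine since $c_i(v_M) = 0$ whenever $i$ occurs in $M$ and $a_i(v_M) = 0$ whenever it does not.

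For \emph{necessity}, suppose $v = u_0 \wedge u_{-1} \wedge u_{-2} \wedge \dots$ is decomposable and extend $\{ u_{-k} \}_{k \geq 0}$ to an admissible basis $\{ u_j \}_{j \in \intZ}$ of $V$. By the basis-independence just recorded, $S(v \otimes v) = \sum_j (u_j \wedge v) \otimes \imath_{u^*_j}(v)$. For $j \leq 0$ the factor $u_j \wedge v$ vanishes because $u_j$ already occurs among the factors of $v$, and for $j \geq 1$ the factor $\imath_{u^*_j}(v)$ vanishes because $u_j$ does not occur among them. Hence every summand is zero, and $S(v \otimes v) = 0$.

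For \emph{sufficiency}, assume $S(v \otimes v) = 0$ with $v \neq 0$; note that $\Lambda$ as defined is already a single charge sector, so no reduction in charge is needed. Expanding $v = \sum_M c_M v_M$ in the standard basis, indexed by strictly decreasing sequences $M = (m_0 > m_1 > \dots)$ that stabilize to $m_{-k} = -k$, I would read off $S(v \otimes v) = 0$ coordinate by coordinate in $\Lambda^+ \otimes \Lambda^-$; this unwinds precisely into the classical quadratic \Plucker\ relations $\sum_s \pm\, c_{M \cup \{ s \}}\, c_{N \setminus \{ s \}} = 0$ among the coordinates $\{ c_M \}$. The remaining task, and the step I expect to be the main obstacle, is to deduce genuine decomposability from these relations in the semi-infinite setting. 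My approach would be to reconstruct the candidate subspace $W = \{ w \in V : w \wedge v = 0 \}$ that $v$ ought to represent, use the \Plucker\ relations to show that $W$ has the correct semi-infinite type, and then show $v$ is a nonzero scalar multiple of the wedge of a basis of $W$. Since each individual \Plucker\ relation involves only finitely many coordinates, and for a fixed reference monomial the nonvanishing $c_M$ differ from it by only finitely much, I can truncate to a finite Grassmannian, invoke the classical theorem that there the \Plucker\ relations cut out exactly the decomposable vectors, and pass to the limit. The delicate point is controlling this limit: ensuring that the finite decomposable approximations assemble coherently into a single semi-infinite form with the required stabilization $u_{-k} = v_{-k}$ for large $k$, so that the resulting wedge genuinely lies in $\Lambda$.
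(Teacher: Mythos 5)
The paper does not actually prove this proposition --- it is quoted from Kac--Raina \cite{Kac-Raina87} --- so there is no in-paper argument to compare against, and your proposal should be judged on its own. Your necessity direction is complete and correct, and is the standard argument: basis-independence of the canonical tensor $\sum_i v_i \otimes v_i^*$, then term-by-term vanishing in a basis adapted to the decomposable $v$. The one thing to make explicit there is why $\{u_{-k}\}_{k\ge 0}$ extends to an \emph{admissible} basis: after column operations you may assume $u_{-k} = v_{-k}$ for $k \ge K$, and then only finitely many vectors need to be adjoined, so the change of basis lies in $GL_\infty$ and the operator identity is legitimate.

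The sufficiency direction is where your write-up stops short: you correctly identify the reduction to the classical finite \Plucker{} theorem but leave both the reduction and the ``delicate limit'' unexecuted. In fact the limit you are worried about does not arise, because $\Lambda$ as defined is the \emph{span} of the basis monomials, so a nonzero $v = \sum_M c_M v_M$ has finite support. Choose $N, N'$ so that every contributing $M$ has $m_{-k} = -k$ for $k \ge N$ and $m_0 < N'$; then $v = \tilde v \wedge t$ with $t = v_{-N} \wedge v_{-N-1} \wedge \cdots$ a fixed tail and $\tilde v \in \Lambda^N W$, $W = \mathrm{span}\{v_j : -N < j < N'\}$ finite-dimensional. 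The terms of $S(v \otimes v)$ with $i \le -N$ vanish because $v_i$ occurs in $t$, those with $i \ge N'$ vanish because $\imath_{v_i^*}(v) = 0$, and the remaining finitely many terms are exactly the finite \Plucker{} tensor of $\tilde v$ wedged with $t \otimes t$; since wedging with the fixed tail is injective on basis monomials, $S(v\otimes v)=0$ forces the finite \Plucker{} relations for $\tilde v$, hence $\tilde v$ (and so $v$) is decomposable by the classical theorem. Writing out this one paragraph closes your gap; no assembly of approximations is needed. (Two caveats worth recording: Kac--Raina prove sufficiency by a different route, namely a $GL_\infty$-orbit/induction-on-energy argument rather than truncation; and the paper ultimately applies the criterion to $\WishartV$, which is an \emph{infinite} series, so there the proposition must be read in a graded completion of $\Lambda$ and your degree-by-degree limiting concern does become the relevant technical point.)
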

For notational simplicity, for a partition $\kappa = (\kappa_0, \dots, \kappa_l)$, we denote $v^+_{\kappa}$ and $v^-_{\kappa}$, analogous to $v_{\kappa}$:
\begin{equation}
v^+_{\kappa} = v_{1+\kappa_0, 0+\kappa_1, \dots, -l+1+\kappa_l, -l, \dots}, \qquad v^-_{\kappa} = v_{-1+\kappa_0, -2+\kappa_1, \dots, -l-1+\kappa_l, -l-2, \dots}.
\end{equation}

For $\kappa$ a partition and $i$ an integer, we define partitions $\kappa+i$ and $\kappa-i$. First, $\kappa = (\kappa_0, \dots, \kappa_l)$ corresponds to a $v^-_{\kappa} \in \Lambda^-$, whose subscript ($-1+\kappa_0, -2+\kappa_1, \dots, -l-1+\kappa_l, -l-2, \dots$) is a decreasing sequence of integers. If $i \in \{ -1+\kappa_0, -2+\kappa_1, \dots, -l-1+\kappa_l, -l-2, \dots \}$, then we say $\kappa+i$ is not well defined; otherwise we can arrange elements in $\{ -1+\kappa_0, -2+\kappa_1, \dots, -l-1+\kappa_l, -l-2, \dots \} \cup \{ i \}$ into a decreasing sequence which is the subscript of a form $v_{\kappa'} \in \Lambda$, and we define $\kappa+i = \kappa'$. Symmetrically, we can define $\kappa-i$: If $i \notin \{ 1+\kappa_0, 0+\kappa_1, \dots, -l+1+\kappa_l, -l, \dots \}$, then $\kappa-i$ is not well defined; otherwise we can arrange $\{ 1+\kappa_0, 0+\kappa_1, \dots, -l+1+\kappa_l, -l, \dots \} \setminus \{ i \}$ into a decreasing sequence, which is the subscript of a $v_{\kappa''} \in \Lambda$, and we define $\kappa-i = \kappa''$.

Now we consider the tensor of forms $\sum^{\infty}_{i = -\infty} v_i \wedge (\WishartV) \otimes \imath_{v^*_i} (\WishartV) \in \Lambda^+ \otimes \Lambda^-$, and compute its coefficients of every $v^+_{\alpha} \otimes v^-_{\beta}$ term. By book-keeping, we get the result
\begin{lemma}
For any two partitions $\alpha$ and $\beta$, the coefficient of the  $v^+_{\alpha} \otimes v^-_{\beta}$ term of $\WishartV$ is
\begin{equation}
c(\alpha, \beta) = \sum_{\substack{i \in \intZ; \\ \textnormal{$\alpha-i$ and $\beta+i$ are both well defined;} \\ l(\alpha-i) \leq n,\ l(\beta+i) \leq n}} \frac{\sgn^+(\alpha, i) \sgn^-(\beta, i)}{(n)_{\alpha-i} (n)_{\beta+i}} G_{\alpha-i} G_{\beta+i}
,
\end{equation}
where $\sgn^+(\alpha, i) = (-1)^j$ if $v^+_{\alpha} = v_{a_0, a_{-1}, \dots}$ and $i = a_{-j}$; $\sgn^-(\beta, i) = (-1)^{j'}$ if $v^-_{\beta} = v_{b_0, b_{-1}, \dots}$ and $b_{-j'} < i < b_{-j'+1}$.
\end{lemma}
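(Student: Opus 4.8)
The plan is to compute the coefficient of $v^+_\alpha \otimes v^-_\beta$ in the tensor $\sum_{i} v_i \wedge (\WishartV) \otimes \imath_{v^*_i}(\WishartV)$ directly from the definition of $\WishartV$ in \eqref{eq:Phi_image_of_W}, and to read off $c(\alpha,\beta)$ by matching indices. Writing $\WishartV = \frac{1}{C_n}$ times the sum $\sum_{\kappa} \frac{1}{(n)_\kappa} G_\kappa v_\kappa$ (with the constraint $l(\kappa) \le n$ understood throughout, and $G_\kappa$ set to $0$ when $l(\kappa)>n$), I would first expand the two factors separately. Applying the exterior product $v_i \wedge (\cdot)$ to $v_\kappa$ produces $0$ when $i$ already occurs among the subscripts of $v_\kappa$, and otherwise produces a form in $\Lambda^+$ equal to $\pm v^+_{\alpha}$ for the unique partition $\alpha$ with $v^+_\alpha = v_i \wedge v_\kappa$; the sign is exactly the $\sgn^+(\alpha,i)$ of the statement, since $i$ lands in the $j$-th slot of $v^+_\alpha$ and must be anticommuted past $j$ entries to reach its sorted position. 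Dually, $\imath_{v^*_i}(v_\lambda)$ is nonzero precisely when $i$ occurs in $v_\lambda$, deleting it to give $\pm v^-_\beta$ with sign $(-1)^{j'}$ recording the position of the removed entry; this is the $\sgn^-(\beta,i)$ factor.

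Next I would reorganize: the coefficient of $v^+_\alpha \otimes v^-_\beta$ is a sum over $i$, and for each $i$ the two tensor factors pin down the contributing $\kappa$ on the left and $\lambda$ on the right. For the left factor, $v_i \wedge v_\kappa = \pm v^+_\alpha$ forces $\kappa$ to be precisely $\alpha - i$ in the notation introduced before the lemma (delete $i$ from the index set of $v^+_\alpha$ and re-sort); this requires $\alpha - i$ to be well defined, i.e. $i$ must actually appear in the subscript of $v^+_\alpha$, which is automatic since $v_i\wedge v_\kappa$ can only equal $v^+_\alpha$ if $i$ is one of its entries. For the right factor, $\imath_{v^*_i}(v_\lambda) = \pm v^-_\beta$ forces $\lambda = \beta + i$ (insert $i$ into the index set of $v^-_\beta$ and re-sort), which requires $\beta + i$ to be well defined, i.e. $i$ absent from the subscript of $v^-_\beta$. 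Substituting $\kappa = \alpha - i$ and $\lambda = \beta + i$ into the product of coefficients $\frac{1}{(n)_{\alpha-i}(n)_{\beta+i}} G_{\alpha-i} G_{\beta+i}$, multiplying the two signs, and summing over all admissible $i$ yields exactly the claimed formula, where the length conditions $l(\alpha-i)\le n$ and $l(\beta+i)\le n$ are inherited from the support restriction $l(\kappa)\le n$ on $\WishartV$.

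The one genuinely delicate point — the part I would treat carefully rather than wave through — is the bookkeeping of the two signs and the verification that the well-definedness conditions on $\alpha - i$ and $\beta + i$ match up correctly with the index shifts relating $v_\kappa$, $v^+_\kappa$, and $v^-_\kappa$. Concretely, one must confirm that the shift by $+1$ built into $\Lambda^+$ and by $-1$ built into $\Lambda^-$ is consistent with defining $\alpha - i$ via the $\Lambda^+$ index set $\{1+\kappa_0, 0+\kappa_1, \dots\}$ and $\beta + i$ via the $\Lambda^-$ index set $\{-1+\kappa_0, -2+\kappa_1, \dots\}$, so that a single integer $i$ indexes both operations coherently; and that the anticommutation signs $\sgn^+$ and $\sgn^-$ are computed relative to the sorted positions in $v^+_\alpha$ and $v^-_\beta$ respectively, exactly as stated. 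Once these sign and index conventions are pinned down, the computation is a routine, if fiddly, exercise in sorting semi-infinite index sequences, so I regard the sign/index matching as the main obstacle and everything else as mechanical expansion.
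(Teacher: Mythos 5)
Your proposal is correct and follows essentially the same route as the paper, which itself dismisses this lemma with the single phrase ``by book-keeping'': one expands $\WishartV$ over partitions, observes that $v_i\wedge v_\kappa=\pm v^+_\alpha$ forces $\kappa=\alpha-i$ and $\imath_{v^*_i}(v_\lambda)=\pm v^-_\beta$ forces $\lambda=\beta+i$, and checks that the anticommutation signs are exactly $\sgn^+(\alpha,i)$ and $\sgn^-(\beta,i)$. Your identification of the sign/index matching as the only delicate point is accurate, and your handling of it is consistent with the paper's conventions for $\Lambda^+$, $\Lambda^-$, $\kappa\pm i$, and the length restriction $l(\kappa)\le n$.
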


To simplify $c(\alpha, \beta)$, we first prove that
\begin{lemma}
For any partitions $\alpha$ and $\beta$, and integer $i$ such that $\alpha-i$ and $\beta+i$ are both well defined, we have for any $n \in \intZ^+$
\begin{equation}
(n)_{\alpha-i}(n)_{\beta+i} = n (n+1)_{\alpha} (n-1)_{\beta}.
\end{equation}
\end{lemma}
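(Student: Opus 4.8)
The plan is to translate the Pochhammer symbol $(n)_\kappa$ into a product over the index set of the fermionic form $v_\kappa$, so that the two operations $\alpha \mapsto \alpha-i$ and $\beta \mapsto \beta+i$ become ``delete $i$'' and ``insert $i$'' on these index sets, at which point the dependence on $i$ cancels outright. Writing the defining product as a product over the cells of the Young diagram, $(n)_\kappa = \prod_{(r,c)\in\kappa}(n+c-r)$, and telescoping the Gamma factors coming from each row, I would record the reformulation
\begin{equation}
(n)_\kappa = \prod_{s\in I_\kappa\setminus I_0}\Gamma(n+s)\Big/\prod_{t\in I_0\setminus I_\kappa}\Gamma(n+t),
\end{equation}
where $I_\kappa=\{\kappa_0,\kappa_1-1,\kappa_2-2,\dots\}$ is the decreasing index set of $v_\kappa$ and $I_0=\{0,-1,-2,\dots\}$ is the index set of the vacuum. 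Both $(n)_\kappa$ and this quotient are polynomials in $n$, so the Gamma expression is only a bookkeeping device: I would read it for $n$ large enough to avoid all poles, establish the identity there, and then extend it to every $n\in\intZ^+$ by polynomiality.

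Next I would spell out how the shifts act on index sets, writing $I+c$ for $\{s+c:s\in I\}$. From the definitions, $v^+_\alpha$ has index set $I_\alpha+1$ and $v^-_\beta$ has index set $I_\beta-1$; moreover $\alpha-i$ is well defined exactly when $i\in I_\alpha+1$, and then $v_{\alpha-i}$ has index set $(I_\alpha+1)\setminus\{i\}$, while $\beta+i$ is well defined exactly when $i\notin I_\beta-1$, and then $v_{\beta+i}$ has index set $(I_\beta-1)\cup\{i\}$. Feeding these into the reformulation and regularizing against the common vacuum product $\prod_{s\in I_0}\Gamma(n+s)$ gives
\begin{equation}
(n)_{\alpha-i} = \frac{\prod_{s\in I_\alpha+1}\Gamma(n+s)}{\Gamma(n+i)\,\prod_{s\in I_0}\Gamma(n+s)}, \qquad (n)_{\beta+i} = \frac{\Gamma(n+i)\,\prod_{s\in I_\beta-1}\Gamma(n+s)}{\prod_{s\in I_0}\Gamma(n+s)}.
\end{equation}
The crucial point, and the entire reason the lemma holds, is that in the product of these two the factor $\Gamma(n+i)$ gained by inserting $i$ into $I_\beta-1$ cancels the factor $\Gamma(n+i)$ lost by deleting $i$ from $I_\alpha+1$; the product is therefore independent of $i$ and equals $\big(\prod_{s\in I_\alpha+1}\Gamma(n+s)\big)\big(\prod_{s\in I_\beta-1}\Gamma(n+s)\big)\big/\big(\prod_{s\in I_0}\Gamma(n+s)\big)^2$.

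It then remains to match this $i$-free expression with $n\,(n+1)_\alpha(n-1)_\beta$. Applying the reformulation with shifted argument gives $(n+1)_\alpha = \prod_{s\in I_\alpha+1}\Gamma(n+s)/\prod_{s\in I_0+1}\Gamma(n+s)$ and $(n-1)_\beta = \prod_{s\in I_\beta-1}\Gamma(n+s)/\prod_{s\in I_0-1}\Gamma(n+s)$, so the only discrepancy between the two sides is the ratio of vacuum normalizations. Using $I_0+1=\{1\}\cup I_0$ and $I_0-1=I_0\setminus\{0\}$, one finds $\prod_{s\in I_0+1}\Gamma(n+s)\cdot\prod_{s\in I_0-1}\Gamma(n+s)=\frac{\Gamma(n+1)}{\Gamma(n)}\big(\prod_{s\in I_0}\Gamma(n+s)\big)^2=n\big(\prod_{s\in I_0}\Gamma(n+s)\big)^2$, which supplies exactly the missing factor $n$ and closes the argument.

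I expect the main obstacle to be organizational rather than conceptual: one must keep the three shift conventions consistent ($I_\alpha+1$ for $v^+_\alpha$, $I_\beta-1$ for $v^-_\beta$, and the $\pm1$ in $(n\pm1)$), and must manipulate the formally infinite products only after cancelling them against the fixed vacuum product $\prod_{s\in I_0}\Gamma(n+s)$, invoking polynomiality in $n$ to legitimize working with Gamma values away from their poles. Once the Gamma-quotient description of $(n)_\kappa$ is in place, the two substantive facts, namely the cancellation of $\Gamma(n+i)$ (giving $i$-independence) and the vacuum identities $I_0+1=\{1\}\cup I_0$, $I_0-1=I_0\setminus\{0\}$ (giving the factor $n$), are immediate.
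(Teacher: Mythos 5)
Your proof is correct, and it is worth contrasting with the paper's, which has the same computational content but a more pedestrian organization. The paper fixes the positions $k$ and $l$ of $i$ inside the index sequences of $v^+_{\alpha}$ and $v^-_{\beta}$, writes all four Pochhammer symbols as explicit double products over rows split at those positions, and then checks by hand that the surviving telescoping factors in the quotient multiply to $n$. You instead encode $(n)_{\kappa}$ as a quotient of Gamma factors indexed by the Maya diagram $I_{\kappa}$ relative to the vacuum $I_0$, so that $\alpha \mapsto \alpha - i$ and $\beta \mapsto \beta + i$ become deletion and insertion of the single index $i$; the $i$-dependence then cancels in one stroke, and the factor $n$ falls out of the two vacuum identities $I_0 + 1 = \{1\} \cup I_0$ and $I_0 - 1 = I_0 \setminus \{0\}$. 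What your route buys is transparency: it isolates the two facts that actually make the lemma true and eliminates the case analysis on where $i$ sits in each sequence. The cost is the formal bookkeeping with divergent products, which you handle correctly in principle---every set in sight differs from a shift of $I_0$ in finitely many elements, so each quotient is really a finite product of Gamma values, and polynomiality in $n$ disposes of the poles of $\Gamma(n+i)$ for very negative $i$. For a written version I would only ask that you state explicitly that each ``regularized'' quotient is shorthand for $\prod_{s \in I \setminus J}\Gamma(n+s)\big/\prod_{t \in J \setminus I}\Gamma(n+t)$ with $J$ the appropriate shifted vacuum, since as literally displayed both numerator and denominator diverge.
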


\begin{proof}
Let $\alpha = (\alpha_0, \alpha_1, \dots, \alpha_q)$, $\beta = (\beta_0, \beta_1, \dots, \beta_r)$, and ($j = 0, 1, \dots$)
\begin{align}
a_{-j} = &
\begin{cases}
\alpha_j - j + 1 & \textnormal{for $j \leq q$,} \\
-j+1 & \textnormal{for $j > q$;}
\end{cases} \label{eq:definition_of_a_j} \\
b_{-j} = &
\begin{cases}
\beta_j - j -1 & \textnormal{for $j \leq r$,} \\
-j-1 & \textnormal{for $j > r$.}
\end{cases}
\end{align}
We assume that $i = a_{-k}$ and $b_{-l} > i > b_{-l-1}$, then we have
\begin{align}
(n)_{\alpha-i} = & \left( \prod^{k-1}_{j=0} \prod^{n+a_{-j}-1}_{p=n-j} p \right) \left( \prod^{\infty}_{j=k+1} \prod^{n+a_{-j}-1}_{p=n-j+1} p \right), \\
(n)_{\beta+i} = & \left( \prod^{l}_{j=0} \prod^{n+b_{-j}-1}_{p=n-j} p \right) \left( \prod^{n+i-1}_{p=n-l-1} p \right) \left( \prod^{\infty}_{j=l+1} \prod^{n+b_{-j}-1}_{p=n-j-1} p \right).
\end{align}
Here we take the convention that $\prod^{p''}_{p'} = 1$ if $p'' < p'$.

On the other hand, we have
\begin{align}
(n+1)_{\alpha} = & \left( \prod^{k-1}_{j=0} \prod^{n+a_{-j}-1}_{p=n-j+1} p \right) \left( \prod^{n+a_{-k}-1}_{p=n-k-1} p \right) \left( \prod^{\infty}_{j=k+1} \prod^{n+a_{-j}-1}_{p=n-j+1} p \right), \\
(n-1)_{\beta} = & \left( \prod^{l}_{j=0} \prod^{n+b_{-j}-1}_{p=n-j-1} p \right) \left( \prod^{\infty}_{j=l+1} \prod^{n+b_{-j}-1}_{p=n-j-1} p \right).
\end{align}
Thus we get 
\begin{equation}
\frac{(n)_{\alpha-i} (n)_{\beta+i}} {(n+1)_{\alpha} (n-1)_{\beta}} = \frac{\prod^{k-1}_{j=0} (n-j) \prod^{n+i-1}_{n-l-1} p}{\prod^l_{j=0} (n-j-1) \prod^{n+a_{-k}-1}_{n-k+1} p}.
\end{equation}
Notice that $a_{-k} = i$, we can verify that
\begin{equation}
\frac{(n)_{\alpha-i} (n)_{\beta+i}} {(n+1)_{\alpha} (n-1)_{\beta}} = n,
\end{equation}
and prove the lemma.
\end{proof}

From this result we observe that if both the $\alpha-i$ and $\beta+i$ are well defined, the condition $\max(l(\alpha-i), l(\beta+i)) \leq n$ is equivalent to $\max(l(\alpha)-1, l(\beta)+1) \leq n$. Then we have for $l(\alpha) \leq n+1$ and $l(\beta) \leq n-1$,
\begin{equation} \label{eq:simplified_formula_of_c_alpha+beta}
c(\alpha, \beta) = \frac{1}{n (n+1)_{\alpha} (n-1)_{\beta}} \sum_{\substack{i \in \intZ; \\ \textnormal{$\alpha-i$ and $\beta+i$ are} \\ \textnormal{both well defined;} \\ \max(l(\alpha)-1, l(\beta)+1) \leq n}} \sgn^+(\alpha, i) \sgn^-(\beta, i) G_{\alpha-i}G_{\beta+i},
\end{equation}
and later in this section we assume $\max(l(\alpha)-1, l(\beta)+1) \leq n$. We find that for all but finitely many $i \in \intZ$, either $G_{\alpha-i}$ or $G_{\beta+i}$ is not well defined, and we can write \eqref{eq:simplified_formula_of_c_alpha+beta} as a finite summation
\begin{equation} \label{eq:simplified_formula_of_c(alpha,beta)}
c(\alpha, \beta) = \frac{1}{n (n+1)_{\alpha} (n-1)_{\beta}} \sum^n_{j=0} \sgn^+(\alpha, a_{-j}) \sgn^-(\beta, a_{-j}) G_{\alpha-a_{-j}}G_{\beta+a_{-j}},
\end{equation}
where $a_{-j}$ is given by \eqref{eq:definition_of_a_j}, and we assume $G_{\beta+a_{-j}} = 0$ if $\beta+a_{-j}$ is not well defined. 

Now we recall the determinantal formula for Schur polynomials \cite{Stanley99} that if $\kappa = (\kappa_0, \dots, \kappa_l)$ with $l \leq n$, then
\begin{equation}
s_{\kappa}(x_1, \dots, x_n) = \frac{
\begin{vmatrix}
x^{n-1+\kappa_0}_1 & x^{n-1+\kappa_0}_2 & \dots & x^{n-1+\kappa_0}_n \\
x^{n-2+\kappa_1}_1 & x^{n-2+\kappa_1}_2 & \dots & x^{n-2+\kappa_1}_n \\
\vdots & \vdots & \dots & \vdots \\
x^{n-l-1+\kappa_l}_1 & x^{n-l-1+\kappa_l}_2 & \dots & x^{n-l-1+\kappa_l}_n \\
x^{n-l-2}_1 & x^{n-l-2}_2 & \dots & x^{n-l-2}_n \\
\vdots & \vdots & \dots & \vdots \\
1 & 1 & \dots & 1 \\
\end{vmatrix}}{\Delta(x_1, \dots, x_n)}.
\end{equation}
Therefore
\begin{equation} \label{eq:consequence_of_Schur_rep}
\Delta(\lambda)^2 s_{\kappa}(\lambda_1, \dots, \lambda_n) = \Delta(\lambda)
\begin{vmatrix}
\lambda^{n-1+\kappa_0}_1 & \dots & \lambda^{n-1+\kappa_0}_n \\
\lambda^{n-2+\kappa_1}_1 & \dots & \lambda^{n-2+\kappa_1}_n \\
\vdots & \dots & \vdots \\
1 & \dots & 1
\end{vmatrix}.
\end{equation}

To simplify the integrals in \eqref{eq:simplified_formula_of_c_alpha+beta}, we need another formula \cite{de_Bruijn55}
\begin{lemma}[de Bruijn's] \label{lemma:de_Braijn}
For any $f_0, \dots, f_{n-1}, g_0, \dots, g_{n-1} \in L^2(\realR)$, 
\begin{multline}
\idotsint d\lambda_1 \dots d\lambda_n \\
\begin{vmatrix}
f_{n-1}(\lambda_1) & f_{n-1}(\lambda_2) & \dots & f_{n-1}(\lambda_n) \\
f_{n-2}(\lambda_1) & f_{n-2}(\lambda_2) & \dots & f_{n-2}(\lambda_n) \\
\vdots & \vdots & \dots & \vdots \\
f_0(\lambda_1) & f_0(\lambda_2) & \dots & f_0(\lambda_n) \\
\end{vmatrix}
\begin{vmatrix}
g_{n-1}(\lambda_1) & g_{n-1}(\lambda_2) & \dots & g_{n-1}(\lambda_n) \\
g_{n-2}(\lambda_1) & g_{n-2}(\lambda_2) & \dots & g_{n-2}(\lambda_n) \\
\vdots & \vdots & \dots & \vdots \\
g_0(\lambda_1) & g_0(\lambda_2) & \dots & g_0(\lambda_n) \\
\end{vmatrix} \\
= n! \det \left( \int f_i(x)g_j(x) dx \right)_{0 \leq i,j \leq n-1}.
\end{multline}
\end{lemma}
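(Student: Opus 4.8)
The plan is to expand both determinants by the Leibniz formula and integrate the resulting finite sum term by term. Denote the two determinants on the left by $D_f$ and $D_g$. Reading off rows from top to bottom, the first determinant is $D_f = \sum_{\sigma \in S_n} \sgn(\sigma) \prod_{c=1}^n f_{n-\sigma(c)}(\lambda_c)$ and the second is $D_g = \sum_{\tau \in S_n} \sgn(\tau) \prod_{c=1}^n g_{n-\tau(c)}(\lambda_c)$, so their product is a double sum over $(\sigma,\tau)$ whose generic term factorizes as a product over the integration variables $\lambda_1,\dots,\lambda_n$. Since each $f_i g_j$ lies in $L^1(\realR)$ by Cauchy--Schwarz, every such term is an $L^1$ function on $\realR^n$ with separated variables, so by Fubini the integral of the $(\sigma,\tau)$ term equals $\prod_{c=1}^n I_{n-\sigma(c),\,n-\tau(c)}$, where I abbreviate $I_{ij} = \int f_i(x) g_j(x)\,dx$.

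The key step is the reindexing of the double sum. I would substitute $c' = \sigma(c)$ inside each product, which rewrites $\prod_c I_{n-\sigma(c),\,n-\tau(c)}$ as $\prod_{c'} I_{n-c',\,n-\rho(c')}$ with $\rho = \tau\sigma^{-1}$, while at the same time $\sgn(\sigma)\sgn(\tau) = \sgn(\rho)$. The resulting summand depends only on $\rho$, so the sum over the now-free variable $\sigma$ contributes an overall factor $n!$, leaving
\begin{equation}
\idotsint D_f D_g \, d\lambda_1 \cdots d\lambda_n = n! \sum_{\rho \in S_n} \sgn(\rho) \prod_{c=1}^n I_{n-c,\,n-\rho(c)},
\end{equation}
and the remaining sum is, by the Leibniz formula, the determinant of the $n \times n$ matrix whose $(a,b)$ entry is $I_{n-a,\,n-b}$.

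Finally I would observe that passing from the entries $I_{n-a,\,n-b}$ back to $I_{ij}$ merely reverses both the row order and the column order of the matrix $(I_{ij})_{0\le i,j\le n-1}$; each reversal contributes the sign $(-1)^{\lfloor n/2\rfloor}$, so the two cancel and $\det(I_{n-a,\,n-b}) = \det(I_{ij})_{0\le i,j\le n-1}$, which is the claimed identity. The argument is entirely a finite computation with no genuine analytic difficulty: the $L^2$ hypothesis serves only to guarantee that the entries $I_{ij}$ are finite and that term-by-term integration is legitimate. The one place demanding care is the bookkeeping of the permutation substitution and the tracking of signs through both the reindexing $\rho = \tau\sigma^{-1}$ and the concluding reversal of indices, and this is where I expect the main (purely combinatorial) effort to lie.
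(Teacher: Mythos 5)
Your proof is correct. Note, however, that the paper does not actually prove this lemma: it is quoted as a known result with a citation to de Bruijn's 1955 paper, so there is no in-paper argument to compare against. What you have written is the standard Andr\'eief-type proof --- expand both determinants by the Leibniz formula, integrate term by term (justified because each $f_ig_j\in L^1(\realR)$ by Cauchy--Schwarz, so each separated-variable term is integrable on $\realR^n$), reindex the double sum over $(\sigma,\tau)$ by $\rho=\tau\sigma^{-1}$ to collapse it to $n!$ times a single Leibniz sum, and finally undo the row/column reversal, whose two signs $(-1)^{\lfloor n/2\rfloor}$ cancel. All steps check out, including the sign bookkeeping $\sgn(\sigma)\sgn(\tau)=\sgn(\rho)$ and the count of $n!$ pairs $(\sigma,\tau)$ for each fixed $\rho$. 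This is exactly the argument one would find in the cited source, so your proposal fills in, correctly and completely, a step the paper delegates to the literature.
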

Now we denote
\begin{equation} \label{eq:definition_of_G_i}
G_i = \int x^i d\mu(x),
\end{equation}
and by \eqref{eq:consequence_of_Schur_rep} and lemma \ref{lemma:de_Braijn}, for $0 \leq j \leq n$ we have ($a_{-j}$ is given by \eqref{eq:definition_of_a_j})
\begin{multline} \label{eq:determinantal_formula_of_alpha-i}
G_{\alpha-a_{-j}} = \idotsint \Delta(\lambda)^2 s_{\alpha-a_{-j}}(\lambda_1, \dots, \lambda_n) d\mu(\lambda_1) \dots d\mu(\lambda_n) = \\
n!
\begin{vmatrix}
G_{n-1+a_0} & G_{n-1+a_0+1} & \dots & G_{n-1+a_0+n-1} \\
G_{n-1+a_{-1}} & G_{n-1+a_{-1}+1} & \dots & G_{n-1+a_{-1}+n-1} \\
\vdots & \vdots & \dots & \vdots \\
\hat{G}_{n-1+a_{-j}} & \hat{G}_{n-1+a_{-j}+1} & \dots & \hat{G}_{n-1+a_{-j}+n-1} \\
\vdots & \vdots & \dots & \vdots \\
G_{n-1+a_{-n}} & G_{n-1+a_{-n}+1} & \dots & G_{n-1+a_{-n}+n-1} 
\end{vmatrix},
\end{multline}
where $\hat{\ }$ means the entry is deleted. For $\beta+a_{-j}$, we denote $v^+_{\beta} = b_{b_0, b_{-1}, \dots}$ and similarly have that if $\beta+a_{-j}$ is well defined and $l(\beta+a_{-j}) \leq n$, then
\begin{multline} \label{eq:determinantal_formula_of_beta+i}
\sgn^-(\beta,a_{-j})G_{\beta+a_{-j}} = \sgn^-(\beta,a_{-j}) \idotsint \Delta(\lambda)^2 s_{\beta+a_{-j}}(\lambda_1, \dots, \lambda_n) d\mu(\lambda_1) \dots d\mu(\lambda_n) = \\
n!
\begin{vmatrix}
G_{n-1+a_{-j}} & G_{n-1+a_{-j}+1} & \dots & G_{n-1+a_{-j}+n-1} \\
G_{n-1+b_0} & G_{n-1+b_0+1} & \dots & G_{n-1+b_0+n-1} \\
G_{n-1+b_{-1}} & G_{n-1+b_{-1}+1} & \dots & G_{n-1+b_{-1}+n-1} \\
\vdots & \vdots & \dots & \vdots \\
G_{n-1+b_{-n+2}} & G_{n-1+b_{-n+2}+1} & \dots & G_{n-1+b_{-n+2}+n-1} \\
\end{vmatrix},
\end{multline}
Here we notice that the $n \times n$ matrix in \eqref{eq:determinantal_formula_of_alpha-i} is constructed from an $(n+1) \times n$ matrix with the $(j+1)$-th row eliminated, and the first row in the matrix in \eqref{eq:determinantal_formula_of_beta+i} is the same as the deleted row in the construction of the matrix in \eqref{eq:determinantal_formula_of_alpha-i}. Thus we have a determinantal formula
\begin{multline}
\sum^n_{j=0} \sgn^+(\alpha, a_{-j}) \sgn^-(\beta, a_{-j}) G_{\alpha-a_{-j}} G_{\beta+a_{-j}} = \\
(n!)^2
\begin{vmatrix}
 B_0 & B_1 & \dots & B_{n-1} \\
G_{n-1+b_0} & G_{n-1+b_0+1} & \dots & G_{n-1+b_0+n-1} \\
G_{n-1+b_{-1}} & G_{n-1+b_{-1}+1} & \dots & G_{n-1+b_{-1}+n-1} \\
\vdots & \vdots & \dots & \vdots \\
G_{n-1+b_{-n+2}} & G_{n-1+b_{-n+2}+1} & \dots & G_{n-1+b_{-n+2}+n-1} \\
\end{vmatrix},
\end{multline}
where all rows except for the first one are the same as those in the matrix in \eqref{eq:determinantal_formula_of_beta+i}, and the first row $(B_0, B_1, \dots, B_{n-1})$ is 
\begin{multline} \label{eq:vector_formula_of_B_k}
(B_0, B_1, \dots, B_{n-1}) = \\
\sum^n_{j=0} \sgn^+(\alpha, a_{-j}) G_{\alpha-a_{-j}}(G_{n-1+a_{-j}}, G_{n-1+a_{-j}+1}, \dots, G_{n-1+a_{-j}+n-1}),
\end{multline}
Notice that $\sgn^+(\alpha, a_{-j}) = (-1)^j$, we find that \eqref{eq:vector_formula_of_B_k} is equivalent to that $B_k$ is the determinant of a $(n+1) \times (n+1)$ matrix ($k = 0, 1, \dots, n-1$):
\begin{equation}
B_k = 
\begin{vmatrix}
G_{n-1+a_0+k} & G_{n-1+a_0} & G_{n-1+a_0+1} & \dots & G_{n-1+a_0+n-1} \\
G_{n-1+a_{-1}+k} & G_{n-1+a_{-1}} & G_{n-1+a_{-1}+1} & \dots & G_{n-1+a_{-1}+n-1} \\
\vdots & \vdots & \vdots & \dots & \vdots \\
G_{n-1+a_{-n}+k} & G_{n-1+a_{-n}} & G_{n-1+a_{-n}+1} & \dots & G_{n-1+a_{-n}+n-1} 
\end{vmatrix}.
\end{equation}
Since in the matrix, the first column is identical to the $(k+2)$-th column, we get $B_k = 0$. Therefore $\sum^n_{j=0} \sgn^+(\alpha, a_{-j}) \sgn^-(\beta, a_{-j}) G_{\alpha-a_{-j}} G_{\beta+a_{-j}} = 0$, which means that $c(\alpha,\beta) = 0$ by \eqref{eq:simplified_formula_of_c(alpha,beta)}.

Summing up these results, we get the conclusion that $\WishartV$ is a decomposable form, and by \eqref{eq:Phi_image_of_W} prove theorem \ref{thm:main_theorem}.

\subsection*{Acknowledgements}

I would like to thank M.~Adler, J.~Baik, J.~Harnad, A.~Yu.~Orlov and P.~Zinn-Justin for helpful comments.

\bibliographystyle{plain}
\bibliography{bibliography.bib}
\end{document}